\newtheorem{theo}{Theorem}
\newtheorem{lemma}{Lemma}
\newtheorem{defn}{Definition}
\DeclareMathOperator{\trace}{tr}
\DeclareMathOperator{\vect}{vec}
\newcommand{\norm}[1]{\|#1\|}
\title{\LARGE \bf
Initial Excitation-based Adaptive Observers for Discrete-Time LTI Systems
}
\author{Anchita Dey*, Soutrik Bandyopadhyay and Shubhendu Bhasin
\thanks{Anchita Dey, Soutrik Bandyopadhyay and Shubhendu Bhasin  are with the
  Department of Electrical Engineering, Indian Institute of Technology Delhi,
  Hauz Khas, New Delhi, Delhi 110016, India {\tt\small
    anchita.dey@ee.iitd.ac.in, soutrik.bandyopadhyay@ee.iitd.ac.in, sbhasin@ee.iitd.ac.in.}}
\thanks{*Corresponding author.}
}
\begin{document}

\maketitle
\thispagestyle{empty}
\pagestyle{empty}

\begin{abstract}
In practical applications, the efficacy of a control algorithm relies critically on the accurate knowledge of the parameters and states of the underlying system. However, obtaining these quantities in practice is often challenging. Adaptive observers address this issue by performing simultaneous state and parameter estimation using only input–output measurements. While many adaptive observer designs exist for continuous-time systems, their discrete-time counterparts remain relatively unexplored. This paper proposes an initial excitation (IE)–based adaptive observer for discrete-time linear time-invariant systems. In contrast to conventional designs that rely on the persistence of excitation condition, which requires continuous excitation and infinite control effort, the proposed method does not require excitation for infinite time, thus making it more practical for stabilization tasks. We employ a two-layer filtering structure and a normalized gradient descent-based update law for learning the unknown parameters.  We also propose modifying the regressors to enhance information extraction, leading to faster convergence. Rigorous theoretical analysis guarantees bounded and exponentially converging estimates of both states and parameters under the IE condition, and simulation results validate the efficacy of the proposed design.


\end{abstract}


\section{Introduction}

Accurate estimation of unmeasurable system states and parameters is critical for effective controller operation. Several applications of control theory, such as robotics, chemical processes, and industrial automation, rely on understanding the internal dynamics of the system that are often inaccessible to direct sensor measurements. In such cases, reliable state and parameter estimation is of paramount importance. 

The problem of state estimation from measurable output observations has been studied extensively for decades. Classical observer designs, such as the Luenberger observer \cite{luenberger2003observers}, have proven highly effective in systems with known dynamics.  While classical observer designs have shown reasonable success in the literature, in practice, it is difficult to design such observers when system parameters are unknown.
In light of this, adaptive observers
\cite{luders1973adaptive,carroll1973adaptive, anderson1974adaptive,kreisselmeier1977adaptive,marine2001robust,zhang2002adaptive,tomei2022enhanced} were
introduced, which simultaneously estimate the system state and parameters online.




A crucial requirement for desirable convergence performance of the adaptive observers is that the input-output signals of the system satisfy the persistence of excitation (PE) condition \cite{sastry2011adaptive}. The PE condition ensures that the measured input-output data contains sufficient information to uniquely identify the uncertain parameters. However, the PE condition is difficult to verify online since it requires the excitation to persist indefinitely, necessitating an infinite control effort and dependence on future signal values.

To obviate this challenge, recent works have explored alternative approaches that relax this restrictive assumption. One promising approach is the Initial Excitation (IE)-based method \cite{roy2017combined} which utilizes finite yet informative excitation during the initial phases of the learning process to ensure parameter convergence {{exponentially.  }}This approach has found reasonable success in the literature in solving adaptive {{control and estimation}} problems \cite{roy2019robustness,jha2019initial, katiyar2022initial}. Unlike the PE condition, which requires the input signal to remain sufficiently rich for all time
\cite{boyd1986necessary}, the IE condition only demands finite excitation over a limited time interval. Once the IE condition is satisfied, adequate information about the system dynamics has been captured, and the excitation can be removed thereafter. Consequently, under IE, the system input can decay to zero while still ensuring convergence of the parameter and state estimation errors. In contrast, PE-based approaches necessitate continuous, albeit possibly small, perturbations to maintain excitation, thereby preventing the system states from settling exactly at the equilibrium.

{{While IE-based adaptive control and identification schemes have been explored in continuous time, their extension to discrete-time systems remains limited \cite{adejc}. The same statement applies to adaptive observer designs, which are mostly explored for continuous-time systems \cite{luders1973adaptive, carroll1973adaptive, kreisselmeier1977adaptive,zhang2002adaptive,anderson1974adaptive,marine2001robust,tomei2022enhanced} in contrast to the limited literature for discrete-time systems, such as \cite{kudva1974discrete,tamaki1981design,suzuki1980discrete, Pan02122023, DEY20238708}. The observers designed in \cite{kudva1974discrete,tamaki1981design,suzuki1980discrete} are for single-input single-output systems, whereas \cite{DEY20238708, Pan02122023} are for multi-input multi-output (MIMO) systems. The method proposed in \cite{Pan02122023} requires knowledge of the sets containing the state and parameters to be estimated. In \cite{DEY20238708}, we developed an adaptive observer for discrete-time MIMO linear time-invariant (LTI) systems that guarantees bounded parameter and state estimates under stabilizing inputs; however, convergence to the true parameters required a PE signal.

In this work, we extend the framework of \cite{DEY20238708} to design an IE-based adaptive observer for discrete-time LTI systems. Two layers of filters are used to express the input-output data in a suitable linear regression form. Unlike \cite{kreisselmeier1977adaptive,suzuki1980discrete,DEY20238708}, where the relation between the output measurements and the unknown parameter has a bounded perturbation-like term owing to the unknown initial state, in this design, we estimate the initial state as well. This allows us to express the output in terms of a modified parameter (that is made of the original parameter and the initial state) without any perturbation-like term in the linear regression. Furthermore, motivated by matrix-regressor formulations in continuous-time adaptive observers \cite{kreisselmeier1977adaptive,kraft1976controllable,jenkins2015matrix,soudbakhsh2017health}, we propose a similar concept of modifying the regressors in the discrete-time setting, such that the regressors retain richer information from input–output data, resulting in faster convergence and improved estimation performance. Accordingly, we introduce the concept of strong-IE for matrix regressors in the discrete-time setting, which is milder than the IE condition for vector regressors \cite{adejc}. A normalized gradient descent-based adaptation law with a switching component is used to estimate the unknown parameter and initial state, which are then used to obtain an estimate of the system state. The switching mechanism is enabled once the strong-IE condition is satisfied. The design ensures exponential convergence of the estimates to their true values without requiring PE. }}

\textit{Contributions:} The key contributions of this work are as follows. We propose a novel IE–based adaptive observer for discrete-time LTI systems that enables online simultaneous state and parameter estimation without requiring PE, thereby enhancing practical applicability. A two-layer filtering structure with extended regressors is developed to extract richer information from input–output data. {{The normalized gradient descent–based update law guarantees exponential convergence of both state and parameter estimation errors under the IE condition.}} Finally, rigorous theoretical analysis and simulation studies validate the effectiveness and convergence properties of the proposed observer design.

\textit{Notations:}
$I_n\in \mathbb{R}^{n \times n}$ denotes the identity matrix. The zero matrix is denoted by $0_n\in\mathbb R^{n\times n}$ and by $0_{n\times m}\in\mathbb R^{n\times m}$. We represent the set of integers from $q$ to $\infty$ by $\mathbb{I}_q^+$ and $\mathbb{I}_q^r \triangleq \{q, q+1, \dots, r-1,r \}$ for integers $q$ and $r$, where $q < r$. For a matrix $A  \triangleq\begin{bmatrix}
  A_1 & A_2 & \dots & A_r
\end{bmatrix}\in \mathbb{R}^{q \times r}
$, where $A_i \in \mathbb{R}^q$ $ \forall i \in \mathbb{I}_1^r$, we define the operators $\vect(A) \triangleq \begin{bmatrix}
  A_1^\intercal & A_2^\intercal & \dots & A_r^\intercal \end{bmatrix}^\intercal \in \mathbb{R}^{qr}
$, and tr$(A)$ to be the trace of the matrix $A$. The symbols $\lambda_{\min} (\cdot) $ and $\lambda_{\max} (\cdot) $ denote the minimum and maximum eigenvalues of a matrix, respectively. We use $\otimes$ to denote the Kronecker product, $\| \cdot \|$ to denote the Euclidean norm of a vector, and $\|\cdot \|_{\text{F}}$ for representing the Frobenius norm of a vector or matrix.

\section{Preliminaries}%
We recall the PE condition for the discrete-time setup from the adaptive control literature \cite{adejc}.
\begin{defn}\label{defn:pe} 
A bounded signal $\phi_t \in \mathbb{R}^{h}$ is said to be \emph{PE} if there exists a fixed integer $t_{PE} \in \mathbb{I}_1^+$, where $t_{PE} \geq h$, and a constant scalar
$\zeta > 0$ such that \end{defn}
\begin{align*}
  \sum_{i=0}^{t_{PE}-1} \phi_{t+i} \phi_{t+i}^\intercal \geq \zeta I_h\;\;\;\; \forall t \in \mathbb{I}_0^+.
\end{align*}
While the PE condition provides a necessary and sufficient condition for exponential convergence of parameter estimates in the adaptive control literature, it is worth noting that this condition is dependent on the future behavior of the dynamical system, making it non-trivial to verify the PE condition online.

To obviate this challenge, the condition of IE was introduced in \cite{roy2017combined}. We now recall the definition of the discrete-time IE condition \cite{adejc}. 
\begin{defn}\label{defn:nie}
A bounded signal $\phi_t \in \mathbb{R}^{h}$ is said to be \emph{IE} if there exists a finite integer $t_{IE} \in \mathbb{I}_1^+$, where $t_{IE} \geq h$, and a constant scalar $\zeta > 0$ such that in the time interval $[0,t_{IE}]$ \end{defn}
\begin{align}\label{eqIE}
  \sum_{i=0}^{t_{IE}-1} \phi_i \phi_i^\intercal \geq \zeta I_h.
\end{align}

Motivated by \cite{fastaoatul} to use regressors that retain more information for faster convergence, we propose the following definition of the strong-IE condition.
{{\begin{defn}\label{defn:ie}
A bounded signal $\phi_t \in \mathbb{R}^{h \times m}$, where $m>1$, is said to be \emph{strongly initially exciting} (SIE) if there exists a finite integer $t_{SIE} \in \mathbb{I}_1^+$, where $t_{SIE} \geq h$, and a constant scalar $\zeta > 0$ such that in the time interval $[0,t_{SIE}]$ 
\end{defn}
\begin{align}\label{eqSIE}
  \sum_{i=0}^{t_{SIE}-1} \phi_i \phi_i^\intercal \geq \zeta I_h.
\end{align}}}


Unlike the PE condition, the IE and SIE conditions require only the current and past measurements of the signal $\phi_t$, thereby enabling online verification.



\section{Problem Formulation}

Consider a discrete-time LTI system 
\begin{align}
    &x_{t+1}=Ax_t+Bu_t, \label{sys1}\\
    &y_t=Cx_t, \label{ytrue}
\end{align}
that is completely observable and has the following canonical realization
\begin{align}
    A=\begin{bmatrix}
       A^{(1)} & I_q & 0_q&... &0_q \\
       A^{(2)} & 0_q & I_q&... &0_q\\
       \vdots & \vdots & \vdots & \ddots & \vdots\\
       A^{(r-1)} & 0_q & 0_q&... &I_q\\
       A^{(r)} & 0_q & 0_q&... &0_q
    \end{bmatrix}, \; C=\begin{bmatrix}
        I_q & 0_q&... &0_q
    \end{bmatrix},\label{AC_canon}
\end{align}
with input $u_t\in\mathbb R^m$, output $y_t\in\mathbb R^q$, state $x_t\in\mathbb R^{n}$ (where $n=qr$), and unknown parameters $A^{(i)}\in\mathbb R^{q\times q}$ $\forall i\in\mathbb{I}_1^r$ and $B\in\mathbb R^{n\times m}$. In addition to the unknown parameters $A,\;B$, the state measurements are also unavailable. The objective is to simultaneously learn the unknown terms $A,\;B$ and $x_t$ using the available input-output data. 

While \cite{DEY20238708} also considers the same objective, the obtained parameter estimates are guaranteed to be bounded with a stabilizing and bounded input, and the parameter estimation error converges to a finite value; the parameter and state estimation errors converge to zero only in the presence of a sufficiently rich input \cite{boyd1986necessary}. However, a sufficiently rich input or PE regressor requires constant energy for all time, making this a restrictive choice. On the other hand, regressors that are IE require a certain amount of energy for a finite time until a condition (\eqref{eqIE} or \eqref{eqSIE}) is met, after which the excitation may be removed; the condition involves the collection of sufficient information in the regressors that is needed to perform successful identification, and further, the satisfaction of the condition can be determined online. These factors motivated us to design a suitable IE-based adaptive observer for estimating the states and parameters. {{The regressors, introduced subsequently, are matrices by design; therefore, the SIE condition \eqref{eqSIE} is used. Accordingly, we use `IE' and `SIE' interchangeably in the following sections.}}

\section{The Adaptive Observer Design}
The plant dynamics in \eqref{sys1} can be rewritten as
\begin{align}
    x_{t+1}=Fx_t+(A-F)x_t+Bu_t,\label{sys2}
\end{align}
where $F\in\mathbb R^{n\times n}$ is a user-defined Schur stable matrix structurally similar to $A$ in \eqref{AC_canon}, i.e., 
\begin{align}
    F=\begin{bmatrix}
       F^{(1)} & I_q & 0_q&... &0_q \\
       F^{(2)} & 0_q & I_q&... &0_q\\
       \vdots & \vdots & \vdots & \ddots & \vdots\\
       F^{(r-1)} & 0_q & 0_q&... &I_q\\
       F^{(r)} & 0_q & 0_q&... &0_q
    \end{bmatrix}.
\end{align}
Let $a\triangleq \vect\left( \begin{bmatrix}
    {A^{(1)}}^\intercal & {A^{(2)}}^\intercal & ... & {A^{(r)}}^\intercal
\end{bmatrix}^\intercal \right)\in\mathbb R^{qn}$, and  $b\triangleq \vect(B)\in\mathbb R^{mn}$, where $a$ and $b$ are vectors containing the unknown terms, and $f\triangleq \vect \left( \begin{bmatrix}
    {F^{(1)}}^\intercal & {F^{(2)}}^\intercal & ... & {F^{(r)}}^\intercal
\end{bmatrix}^\intercal \right)\in\mathbb R^{qn}$. These are combined to define the unknown parameter vector $p\triangleq \begin{bmatrix}(a-f)^\intercal & b^\intercal \end{bmatrix}^\intercal\in\mathbb R^{qn+mn}$ and its estimate $\hat p_t$. The plant dynamics in \eqref{sys2} can be further modified using the parameter vectors as follows.
\begin{align}
    x_{t+1}&=Fx_t+(a-f)y_t+Bu_t=Fx_t+Z_tp,
\end{align}
where $Z_t\triangleq \begin{bmatrix}        I_n\otimes y_t^\intercal  & I_n\otimes u_t^\intercal       \end{bmatrix}\in\mathbb R^{n\times (qn+mn)}$.
We are interested in estimating the parameter $p$ and the state $x_t$ using the input and output measurements. To this end, we define filter matrices in the following subsections that enable us to form a linear regression using the input-output data and the unknown parameter vector; the regression is then leveraged to design a suitable update law.

\subsection{The First Layer of Filter}
Consider the following dynamics for a filter variable $M_t\in\mathbb R^{n\times (qn+mn)}$
\begin{align}\label{FilterM}
    M_{t+1}=FM_t+Z_t\;,\;\;M_0=0_{n\times (qn+mn)}.
\end{align}
This allows us to write the plant state and output as
\begin{align}
    &x_t=M_tp+F^tx_0, \text{ and }  \label{xreg}\\
    &y_t=\phi_t^\intercal p+CF^t x_0=\underbrace{\begin{bmatrix}
        \phi_t^\intercal & CF^t
    \end{bmatrix}}_{=:w_t} \underbrace{\begin{bmatrix}
        p^\intercal & x_0^\intercal
    \end{bmatrix}^\intercal}_{=:\psi}   \label{yreg}
\end{align}
respectively, where the regressors $\phi_t\triangleq M_t^\intercal C^\intercal\in\mathbb R^{(qn+mn)\times q}$, $w_t\in\mathbb R^{q\times (qn+mn+n)}$ {{and the unknown parameter $\psi\in\mathbb R^{qn+mn+n}$. Although the term $\psi$ contains the initial state $x_0$, we treat it as a constant parameter and attempt to learn it along with the plant parameter $p$. This allows writing \eqref{yreg} as a linear regression without any perturbation-like term $CF^tx_0$, which otherwise existed in the regression expression when $x_0$ was not being estimated (see \cite{DEY20238708, kreisselmeier1977adaptive}). }}

The expression $y_t=w_t\psi$ is in the linear regression form, with $\psi$ being unknown. Let the estimate of $\psi$ be denoted by $\hat \psi_t$. Designing a suitable adaptation law for $\hat \psi_t$ yields the parameter estimates $\hat p_t$ as well as estimates of the initial state $ \hat x_{{0_t}}$ using
\begin{align}\label{extractpx}
\begin{aligned}
   & \hat p_t= \begin{bmatrix}
        I_{qn+mn} & 0_{(qn+mn)\times n}\\0_{n\times (qn+mn)} & 0_n
    \end{bmatrix}\hat \psi_t,\\
    &\hat x_{0_t}= \begin{bmatrix}
        0_{qn+mn} & 0_{(qn+mn)\times n}\\0_{n\times (qn+mn)} & I_n
    \end{bmatrix}\hat \psi_t.
    \end{aligned}
\end{align} 
The adaptive observer state and output are defined using \eqref{xreg} and \eqref{yreg} as
\begin{align}
  &  \hat x_t=M_t\hat p_t+F^t \hat x_{{0_t}}=\begin{bmatrix}
        M_t & F^t
    \end{bmatrix}\hat \psi_t ,\;\;\text{and }  \label{xhat}\\
 &   \hat y_t=w_t\hat \psi_t,  \label{yhat}  
\end{align}
respectively. 

While an adaptive law for $\hat \psi_t$ can be developed using \eqref{yreg} and \eqref{yhat}, we propose the use of larger regressor matrices that contain more information to allow faster convergence of the estimates to their true values. 

\subsection{Modification of the Linear Regression}
In this work, we propose the use of a normalized gradient descent-based adaptive law. The law is developed with an aim to minimize the output estimation error $\tilde y_t\triangleq y_t-\hat y_t$. {{However, the convergence speed of the observer can be arbitrarily increased by including an additional $qn+mn+n-1$ independent error signals \cite{kreisselmeier1977adaptive,lion1967rapid}. To this end, we define the following matrices.     \begin{align}\label{defWYt}\begin{aligned}
   & W_t\triangleq \begin{bmatrix}
    w_{t_{(0)}}^\intercal  &  w_{t_{(1)}}^\intercal  &   ... &  w_{t_{(qn+mn+n-1)}}^\intercal 
\end{bmatrix}^\intercal  \\&\;\;\;\;\;\;\;\;\;\;\;\;\;\;\;\;\;\;\;\;\;\;\;\;\;\;\;\;\;\;\;\;\;\;\;\;\;\;\;\;\;\;\;\;\;\;\;\in\mathbb R^{q(qn+mn+n)\times(qn+mn+n)},  \\
& Y_t\triangleq \begin{bmatrix}
    y_{t_{(0)}}^\intercal  &  y_{t_{(1)}}^\intercal  &   ... &  y_{t_{(qn+mn+n-1)}}^\intercal 
\end{bmatrix}^\intercal  \in\mathbb R^{q(qn+mn+n)}, 
\end{aligned}
\end{align}
where the elements are generated as
\begin{align}\label{MatReg}\begin{aligned}
&w_{t_{(0)}}\triangleq w_t,\; \;w_{0_{(i)}}\triangleq 0_{q\times(qn+mn+n)}\;\;\forall i\in\mathbb I_1^{qn+mn+n-1},\\
& w_{t_{(i)}}\triangleq \alpha w_{{t-1}_{(i)}}+(1-\alpha)w_{{t-1}_{(i-1)}} \;\;\forall (t,i)\in\mathbb {I}_1^+\times\mathbb I_1^{qn+mn+n-1}, \\
&y_{t_{(0)}}\triangleq y_t,\; \;y_{0_{(i)}}\triangleq 0_{q}\;\;\forall i\in\mathbb I_1^{qn+mn+n-1},\\
& y_{t_{(i)}}\triangleq \alpha y_{{t-1}_{(i)}}+(1-\alpha)y_{{t-1}_{(i-1)}} \;\;\forall (t,i)\in\mathbb {I}_1^+\times\mathbb I_1^{qn+mn+n-1},  
\end{aligned}
\end{align}
where $\alpha\in (0,1)$. Comparing \eqref{yreg} with \eqref{defWYt} and \eqref{MatReg} allows us to write
\begin{align}
    Y_t=W_t\psi.\label{YNewreg}
\end{align}}}
 {{The $qn+mn+n$ variables in $Y_t$ lead to $qn+mn+n$ error signals in $Y_t-W_t\hat\psi_t$. Given sufficient excitation, the stacking of the past regressors with the current one creates possibilities of spanning the $qn+mn+n$-dimensional subspace faster. The update law for $\hat \psi_t$ is subsequently developed using this new regression \eqref{YNewreg} along with another filter variable discussed next.}}

\subsection{The Second Layer of Filter}
The incorporation of IE is done as a switched adaptive mechanism using another layer of filter variables defined as
\begin{align}
  &  S_{t+1}=\sigma S_t+W_t^\intercal W_t,\;\;\;S_0=0_{qn+mn+n}, \label{Sreg}\\
   & \rho_{t+1}=\sigma \rho_t+W_t^\intercal Y_t,\;\;\;\rho_0=0_{(qn+mn+n)\times 1},\label{rhoreg}
\end{align}
 where $\sigma \in (-1,1)$. The variables $S_t$ and $\rho_t$ satisfy 
 \begin{align}
     \rho_t=S_t \psi,  \label{rho_S}
 \end{align}
 which is another linear regression involving the unknown vector $\psi$. 
 
 \subsection{SIE Condition}
 A suitable adaptation law for learning $\hat \psi_t$ is developed leveraging three linear regression components - two of which use \eqref{YNewreg} and \eqref{rho_S}. The third component in the update law is also based on the regression \eqref{rho_S}; however, it corresponds to the satisfaction of the SIE condition defined next. The complete update law is given in Sec.~\ref{ParaLawSec}. 
 
 Based on Definition \ref{defn:ie}, the bounded signal $W_t$ is SIE if $\exists$ a fixed integer $t_{SIE}\geq qn+mn+n$, and a constant $\zeta>0$ such that
\begin{align}\label{IEcond}
    \sum_{i=0}^{t_{SIE}-1} W_i^\intercal W_i\geq \zeta I_{qn+mn+n}.
\end{align}

\subsection{Parameter Estimation Law}\label{ParaLawSec}
The parameter $\hat \psi_t$ is estimated using the following normalized gradient descent-based update law.
\begin{align}
    \hat \psi_{t+1}=&\hat\psi_t+ \{\kappa_1W^\intercal_{t+1}(Y_{t+1}-W_{t+1}\hat\psi_t) + \kappa_2 S_{t+1}^\intercal (\rho_{t+1}-\nonumber\\
        &S_{t+1}\hat\psi_t)+
    \kappa_3 \eta S_{t_{SIE}}^\intercal (\rho_{t_{SIE}}-S_{t_{SIE}}\hat\psi_t)\}/\{1+\trace(\Gamma_t^\intercal \Gamma_t)\} \nonumber\\&
    \;\;\;\;\;\;\;\;\;\;\;\;\;\;\;\;\;\;\;\;\;\;\;\;\;\;\;\;\;\;\;\;\;\;\;\;\;\;\;\;\;\;\;\;\;\;\;\;\;\;\;\;\;\;\;\;\;\;\;\;\;\;\;\;\;\;\;\;\;\;\forall t\in\mathbb I_0^+\label{eq:update_law},\\
    \text{where }&\Gamma_t\triangleq \begin{bmatrix}
        \sqrt \kappa_1 W_{t+1}^\intercal & \sqrt \kappa_2 S_{t+1}^\intercal & \sqrt \kappa_3 \eta S_{t_{SIE}}^\intercal
    \end{bmatrix}^\intercal, \label{eq:Gamma_definition}\\
    & \eta=\begin{cases}
        &1,\;\;\forall t\geq t_{SIE}\\
        &0,\;\;\text{otherwise}.
    \end{cases}\label{etalaw}
\end{align}

Once initialized with estimates of state and parameters, the adaptive observer takes in the input and output data and provides online simultaneous estimates of the parameter{{ $\hat \psi_t$ using \eqref{eq:update_law}-\eqref{etalaw} from which we can then obtain the estimates $\hat p_t$ and $\hat x_{0_t}$ and $\hat x_t$ using \eqref{extractpx} and \eqref{xhat}.}} Algorithm \ref{alg} gives the steps for implementing the proposed observer.
\begin{algorithm}[h!]
\caption{IE-based Adaptive Observer}\label{alg}
\begin{algorithmic}[1]
\REQUIRE $n$, $q$, $m$, $\hat x_{0_0}$, $\hat p_0$, $F$, $\alpha$, $\sigma$, $\kappa_1$, $\kappa_2$, $\kappa_3$, $u_t$.
\ENSURE  $\hat p_t$, $\hat x_{0_t}$, $\hat x_t$ $\forall t \in\mathbb I_1^+$.\\
\STATE Measure $y_0$. Compute $f$, and define $M_0$, $S_0$, $\rho_0$, $\hat \psi_0$. Set $\eta=0$, $S_{t_{SIE}}=0$ and $\rho_{t_{SIE}}=0$. 
\STATE Apply input $u_0$. 
\FOR{$t\geq 1$} 
\STATE Measure output $y_t$ of the plant given by \eqref{sys1}-\eqref{ytrue}.
\STATE Compute $M_t$, $w_t$, $W_t$, $Y_t$, $S_t$, $\rho_t$, $\Gamma_t$ using \eqref{FilterM}, \eqref{yreg}, \eqref{defWYt}-\eqref{rhoreg} and \eqref{eq:Gamma_definition}. Also, compute $\sum_{i=0}^{t-1}W_{i}^\intercal W_i$.
\IF{ {{($ \eta=0$) $\land$ (\eqref{IEcond} holds true)}}}
\STATE Set $t_{SIE}=t$, $S_{t_{SIE}}=S_t$, $\rho_{t_{SIE}}=\rho_t$ and $\eta=1$.
\ENDIF
\STATE Update $\hat \psi_t$ using \eqref{eq:update_law}.
\STATE Compute $\hat p_t$, $\hat x_{0_t}$, $\hat x_t$ using \eqref{extractpx} and \eqref{xhat}.
\STATE Apply $u_t$ to the plant \eqref{sys1}.
\STATE Update $t\leftarrow t+1$.
\ENDFOR
\end{algorithmic}
\end{algorithm}


\section{Convergence Analysis}
We now state a consequence of the IE condition that would be useful for the proof of convergence of the parameter estimates.
\begin{lemma}
\label{lemma:pd} The regressor $W_t$ defined in \eqref{MatReg} is SIE if and only if the regressor $S_t$ defined in \eqref{Sreg} is positive definite for some $t=t_{SIE}$, where $t_{SIE}\geq qn+mn+n$, and $n,\;q,\;m$ are the number of states, outputs and inputs of the plant mentioned in \eqref{sys1}-\eqref{ytrue}.
\end{lemma}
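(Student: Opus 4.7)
The plan is to unfold the one-step recursion in \eqref{Sreg} into the closed form
\begin{equation*}
S_{t_{SIE}} = \sum_{i=0}^{t_{SIE}-1} \sigma^{\,t_{SIE}-1-i}\,W_i^\intercal W_i,
\end{equation*}
which follows by a straightforward induction on $t$ starting from $S_0 = 0_{qn+mn+n}$. Once this closed form is in hand, both implications reduce to comparing the weighted sum $S_{t_{SIE}}$ with the unweighted SIE sum $\sum_{i=0}^{t_{SIE}-1} W_i^\intercal W_i$, using the fact that each summand $W_i^\intercal W_i$ is positive semidefinite so that positive scalar bounds on the weights $\sigma^{t_{SIE}-1-i}$ transfer directly to the Loewner order.

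For the ``only if'' direction (SIE implies $S_{t_{SIE}}\succ 0$), I would take $\sigma\in(0,1)$ so that $\sigma^{t_{SIE}-1-i}\geq \sigma^{t_{SIE}-1}>0$ for every $i\in\mathbb{I}_0^{t_{SIE}-1}$. Combining this bound with PSD-ness of each $W_i^\intercal W_i$ and the SIE hypothesis \eqref{IEcond} gives
\begin{equation*}
S_{t_{SIE}} \geq \sigma^{\,t_{SIE}-1}\sum_{i=0}^{t_{SIE}-1} W_i^\intercal W_i \geq \sigma^{\,t_{SIE}-1}\,\zeta\, I_{qn+mn+n},
\end{equation*}
which is strictly positive definite, as desired.

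For the converse ($S_{t_{SIE}}\succ 0$ implies SIE), I would use the opposite weight bound $\sigma^{t_{SIE}-1-i}\leq 1$ (again for $\sigma\in(0,1)$) to obtain
\begin{equation*}
\sum_{i=0}^{t_{SIE}-1} W_i^\intercal W_i \geq S_{t_{SIE}} \geq \lambda_{\min}(S_{t_{SIE}})\, I_{qn+mn+n},
\end{equation*}
so $W_t$ satisfies \eqref{IEcond} with constant $\zeta \triangleq \lambda_{\min}(S_{t_{SIE}})>0$ and the same instant $t_{SIE}\geq qn+mn+n$ supplied by the hypothesis.

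The main technical point to watch is the sign of $\sigma$: the paper permits $\sigma\in(-1,1)$, but a negative $\sigma$ places alternating signs on the weights in the closed-form expansion and breaks both monotonicity steps above, since the weighted sum then neither dominates nor is dominated by the unweighted sum in the Loewner order. I expect the cleanest resolution is to restrict to $\sigma\in(0,1)$, which is the standard exponential-forgetting choice and appears consistent with the role of $S_t$ in the update law; a more delicate pairing-of-consecutive-terms argument would otherwise be needed to cover the $\sigma<0$ regime.
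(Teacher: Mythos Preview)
Your argument is correct and in fact considerably more detailed than the paper's own proof, which is a one-line citation (``The proof follows from [Lemma~1, adejc]'') with no self-contained reasoning. The closed-form expansion of $S_{t_{SIE}}$ and the two Loewner-order sandwich bounds you give are exactly the standard mechanism behind that cited lemma, so there is no substantive divergence in approach---you have simply written out what the paper defers to a reference.

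Your caveat about the sign of $\sigma$ is well taken and not merely cosmetic. For $\sigma<0$ the ``only if'' direction genuinely fails as an if-and-only-if at a \emph{fixed} instant $t_{SIE}$: with $t_{SIE}=2$ one has $S_2=\sigma W_0^\intercal W_0 + W_1^\intercal W_1$, which can be indefinite or negative definite even when $W_0^\intercal W_0+W_1^\intercal W_1\succ 0$. (The converse direction, by contrast, survives: bounding the positive-weight summands by the full unweighted sum still gives $S_{t_{SIE}}\preceq\sum_i W_i^\intercal W_i$.) So your instinct to restrict to $\sigma\in(0,1)$ is the right one for a clean equivalence; the paper's stated range $\sigma\in(-1,1)$, together with the simulation choice $\sigma=-0.98$, sits somewhat uneasily with the lemma as written, and the cited reference presumably assumes a positive forgetting factor.
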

\begin{proof}
{{The proof follows from \cite[Lemma 1]{adejc}.}}
\end{proof}

We now state the main result of the paper.
\begin{theo}
 {{ Consider the plant dynamics given by \eqref{sys1}-\eqref{ytrue}, and let the input $u_t$ be a bounded and stabilizing signal such that the resulting regressor $W_t$ in \eqref{MatReg} is bounded for all $t\geq 0$ and satisfies the SIE condition in \eqref{IEcond}. Then, the switched parameter update law in \eqref{eq:update_law}-\eqref{etalaw} ensures that the parameter estimation error $\tilde{\psi}_t$ and consequently, the state estimation error $\tilde x_t\triangleq x_t-\hat x_t$, exponentially converges to its respective origin.}}
\end{theo}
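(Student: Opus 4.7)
The plan is to write a linear error recursion for $\tilde\psi_t\triangleq \psi-\hat\psi_t$ and then drive a Lyapunov argument with the function $V_t\triangleq \tilde\psi_t^\intercal\tilde\psi_t$. Substituting the regression identities $Y_{t+1}=W_{t+1}\psi$ from \eqref{YNewreg} and $\rho_{t+1}=S_{t+1}\psi$, $\rho_{t_{SIE}}=S_{t_{SIE}}\psi$ from \eqref{rho_S} into \eqref{eq:update_law}, each innovation $Y_{t+1}-W_{t+1}\hat\psi_t$, $\rho_{t+1}-S_{t+1}\hat\psi_t$, $\rho_{t_{SIE}}-S_{t_{SIE}}\hat\psi_t$ becomes the corresponding matrix times $\tilde\psi_t$. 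Using $\trace(\Gamma_t^\intercal\Gamma_t)=\trace(\Omega_t)$ that follows from \eqref{eq:Gamma_definition}, the error dynamics collapse to
\begin{align*}
\tilde\psi_{t+1}=\Big(I-\tfrac{\Omega_t}{1+\trace(\Omega_t)}\Big)\tilde\psi_t,\;\; \Omega_t\triangleq \kappa_1 W_{t+1}^\intercal W_{t+1}+\kappa_2 S_{t+1}^\intercal S_{t+1}+\kappa_3\eta\, S_{t_{SIE}}^\intercal S_{t_{SIE}},
\end{align*}
where $\Omega_t\succeq 0$ and hence $\lambda_{\max}(\Omega_t)\leq \trace(\Omega_t)$.

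Next I would analyze $V_t$. Since every eigenvalue of $I-\Omega_t/(1+\trace(\Omega_t))$ lies in $[1/(1+\trace(\Omega_t)),1]$, expanding $V_{t+1}$ and using $D_t(2I-D_t)\succeq D_t$ with $D_t\triangleq \Omega_t/(1+\trace(\Omega_t))$ yields $V_{t+1}-V_t\leq -\tilde\psi_t^\intercal\Omega_t\tilde\psi_t/(1+\trace(\Omega_t))\leq 0$, establishing monotone non-increase of $V_t$ from $t=0$ and hence uniform boundedness of $\tilde\psi_t$. Once $t\geq t_{SIE}$, we have $\eta=1$, and Lemma \ref{lemma:pd} together with the SIE hypothesis \eqref{IEcond} imply $S_{t_{SIE}}\succ 0$. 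Letting $\mu\triangleq \lambda_{\min}(S_{t_{SIE}}^\intercal S_{t_{SIE}})>0$, we get $\Omega_t\succeq \kappa_3\mu I$, and combining this with a uniform upper bound $\bar T$ on $\trace(\Omega_t)$ produces $V_{t+1}\leq \big(1-\tfrac{\kappa_3\mu}{1+\bar T}\big)V_t$ for all $t\geq t_{SIE}$. This is a geometric decay with contraction factor in $(0,1)$, so $\tilde\psi_t\to 0$ exponentially, and consequently $\hat p_t$ and $\hat x_{0_t}$ converge to their true values through \eqref{extractpx}.

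For the state error, subtracting \eqref{xhat} from \eqref{xreg} gives $\tilde x_t=\begin{bmatrix} M_t & F^t\end{bmatrix}\tilde\psi_t$. Boundedness of $M_t$ is obtained from \eqref{FilterM} using the Schur stability of $F$ together with boundedness of $Z_t$, which holds since the stabilizing bounded input $u_t$ renders both $x_t$ and $y_t$ bounded; the same Schur property yields $\|F^t\|\to 0$ exponentially. Therefore $\|\tilde x_t\|\leq (\|M_t\|+\|F^t\|_\text{F})\|\tilde\psi_t\|$ decays exponentially, proving the claim. The main technical obstacle is the uniform contraction rate: the normalization $1+\trace(\Omega_t)$ stabilizes the recursion but can dilute the decay, so a uniform upper bound on $\trace(\Omega_t)$ is required, and this in turn demands showing that $W_t$ and $S_t$ are uniformly bounded. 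The chain $M_t$ bounded (via Schur $F$ and bounded plant trajectories) $\Rightarrow$ $\phi_t, w_t, W_t$ bounded $\Rightarrow$ $S_t$ bounded (via $|\sigma|<1$ in \eqref{Sreg}) closes this gap and is the step that must be carried out carefully for the exponential rate to be genuinely uniform.
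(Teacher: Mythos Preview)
Your proof is correct and follows essentially the same route as the paper: you derive the same error recursion $\tilde\psi_{t+1}=(I-D_t)\tilde\psi_t$ with $D_t=\Gamma_t^\intercal\Gamma_t/(1+\trace(\Gamma_t^\intercal\Gamma_t))$, use the same quadratic Lyapunov function, and obtain the same key inequality $V_{t+1}-V_t\le -\tilde\psi_t^\intercal D_t\tilde\psi_t$ (the paper reaches it via $\epsilon_t=\gamma_t\tilde\psi_t$ and the bound $\|\gamma_t\|_{\text F}^2<1$, whereas you use the equivalent spectral fact $D_t(2I-D_t)\succeq D_t$ for $0\preceq D_t\preceq I$). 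Your treatment of the state error via $\tilde x_t=[M_t\; F^t]\tilde\psi_t$ is also the paper's. The one place where you are actually more careful than the paper is in insisting on a uniform upper bound $\bar T$ on $\trace(\Omega_t)$ to turn the per-step positive-definiteness of $D_t$ into a \emph{uniform} contraction factor; the paper simply asserts the existence of a constant $\kappa\le\lambda_{\min}(\gamma_t^\intercal\gamma_t)$ without spelling out that this hinges on the boundedness chain $M_t\Rightarrow W_t\Rightarrow S_t$ you describe.
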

\begin{proof}
  We can re-write the update law in \eqref{eq:update_law} as
  \begin{align}
    \hat{\psi}_{t+1} =
    \hat{\psi}_t + \frac{\Gamma_t^{\intercal} \Gamma_t \tilde{\psi}_t}{1+\trace(\Gamma_t^\intercal \Gamma_t)}
    =
    \hat{\psi}_t + \gamma_t^\intercal \gamma_t \tilde{\psi}_t,
    \label{eq:update_law_unmeasurable}
  \end{align}
where $\gamma_t \triangleq \frac{\Gamma_t}{\sqrt{1+\trace(\Gamma_t^\intercal \Gamma_t)}}$
and $\Gamma_t$ is defined in \eqref{eq:Gamma_definition}. Subtracting both sides of
\eqref{eq:update_law_unmeasurable} from $\psi$, we obtain the dynamics of the
parameter estimation error as
\begin{align}
  \tilde{\psi}_{t+1} = [I_{qn + mn + n} - \gamma_t^\intercal \gamma_t]\tilde{\psi}_t.
  \label{eq:psi_tilde}
\end{align}
We now consider the candidate Lyapunov function $V_{t} \triangleq \| \tilde{\psi}_{t} \|_{\text{F}}^2 = \trace(\tilde{\psi}_{t}^{\intercal}\tilde{\psi}_{t})$. Using the dynamics of the parameter estimation error from \eqref{eq:psi_tilde} and leveraging the properties of the trace operator, we can write
  \begin{align*}
    V_{t+1} = \trace(\tilde{\psi}_{t+1}^{\intercal}\tilde{\psi}_{t+1})
  = \trace(\tilde{\psi}_{t}^{\intercal}\tilde{\psi}_{t})
  + \trace(-2 \epsilon_t^\intercal \epsilon_t + \epsilon_t \epsilon_t^\intercal \gamma_t \gamma_t^\intercal),
  \end{align*}
where $\epsilon_t \triangleq \gamma_t \tilde{\psi}_t$. Subtracting $V_t$ from both sides yields
  \begin{align*}
    V_{t+1} - V_t =
- \trace(\epsilon_t^\intercal \epsilon_t )  + \trace(-\epsilon_t^\intercal \epsilon_t + \epsilon_t \epsilon_t^\intercal \gamma_t \gamma_t^\intercal).
  \end{align*}
  Utilizing the fact that $\trace(\epsilon_t^\intercal \epsilon_t) = \norm{\epsilon_t}_{\text{F}}^2$, we have
  \begin{align*}
    \Delta V_{t+1} \leq
  - \trace(\epsilon_t^\intercal \epsilon_t )  - \norm{\epsilon_t}_{\text{F}}^2+ \norm{\epsilon_t}_{\text{F}}^2 \norm{\gamma_t}_{\text{F}}^2,
  \end{align*}
  where $\Delta V_{t+1} \triangleq V_{t+1} - V_t$. Note that
  $\norm{\gamma_t}_{\text{F}}^2 = \frac{\norm{\Gamma_t}_{\text{F}}^2}{1+ \norm{\Gamma_t}_{\text{F}}^2} < 1$. We thus obtain the bound
  \begin{align}
    \Delta V_{t+1} \leq
  - \trace(\epsilon_t^\intercal \epsilon_t ) = -\trace(\tilde{\psi}_t^\intercal \gamma_t^\intercal \gamma_t \tilde{\psi}_t). \label{eq:last_step}
  \end{align}
If the regressor $W_t$ is SIE, then $S_{t_{SIE}}$ is positive-definite (Lemma \ref{lemma:pd}) and $\eta = 1 \;\; \forall \; t\geq t_{SIE}$. Thus, for all $t\geq t_{SIE}$, the   matrix $\gamma_t^\intercal \gamma_t$ is positive-definite. In other words, there exists a positive constant $\kappa$ such that $\kappa \leq \lambda_{\min}(\gamma_t^\intercal \gamma_t) \;\;\forall t \geq t_{SIE}$. This implies that
  \begin{align}
  & \Delta V_{t+1} \leq -\trace(\tilde{\psi}_t^\intercal \gamma_t^\intercal \gamma_t \tilde{\psi}_t) \leq 
  - \lambda_{\min}(\gamma_t^\intercal \gamma_t) \norm{\tilde{\psi}_t}_{\text{F}}^2 \nonumber\\
   \Rightarrow & \;V_{t+1}-V_t\leq -\kappa V_t \Rightarrow  V_{t+1}\leq (1-\kappa) V_t. \label{VtKappa}
  \end{align}
Further, since $\|\gamma_t\|_{\text{F}}^2 < 1$ and  $\gamma_t^\intercal \gamma_t$ is symmetric positive definite under the SIE assumption, we can write
\begin{align}
   0 < \lambda_{\min}(\gamma_t^\intercal \gamma_t) \leq \lambda_{\max}(\gamma_t^\intercal \gamma_t) \leq 1   
    &\Rightarrow \kappa\in(0,1] \nonumber\\
    &\Rightarrow (1-\kappa)\in[0,1). \label{kapparange}
\end{align}
Therefore, from \eqref{VtKappa} and \eqref{kapparange}, it is proved that the parameter estimation error $\tilde{\psi}_t$ exponentially converges to the origin. 

The state estimation error is given by $\tilde x_t=\begin{bmatrix}
    M_t & F^t
\end{bmatrix}\tilde \psi_t$ from \eqref{xreg} and \eqref{xhat}. Since $F$ is Schur-stable, and $M_t$ evolves following \eqref{FilterM}, we can conclude that with a stabilizing and bounded input $u_t$ and the estimation error $\tilde\psi_t$ exponentially converging to origin, the state estimation error $\tilde x_t$ also exponentially converges to the respective origin.  
\end{proof}

\section{Simulation Result}
Consider the following LTI system:
\begin{align*}
    &x_{t+1}=\begin{bmatrix}0.4 & 1 & 0\\ 0.5 & 0 & 1\\-0.1 & 0 & 0\end{bmatrix}x_t+\begin{bmatrix}0.1 & -0.2\\0.2 & 0.1 \\0.3 & 0\end{bmatrix}u_t,\\
    &y_t=\begin{bmatrix}
        1 & 0 & 0
    \end{bmatrix}x_t.
\end{align*}
The initial terms are arbitrarily chosen as $\hat A_0=\begin{bmatrix}
    5 & 1 & 0;\;5 & 0 & 1;\;5 & 0 & 0 
\end{bmatrix}$, $\hat B_0=\begin{bmatrix}
   5 & 5;\;5 & 5;\;5& 5
\end{bmatrix}$, $x_0=\begin{bmatrix}
    1&1&1
\end{bmatrix}^\intercal$, and $\hat x_{0_0}=\begin{bmatrix}
    0.9&0.9&0.9
\end{bmatrix}^\intercal$. The observer gain and tuning parameters are selected as $F=\begin{bmatrix}
    0.0022 & 1 & 0;\;0.011 & 0 & 1;\;0.0001 & 0 & 0
\end{bmatrix}$, $\kappa_1=$ $\kappa_2=1.05$, $\kappa_3=0.01$, $\alpha=0.26$ and $\sigma=-0.98$. Figures \ref{fig:1} and \ref{fig:2} show that with the proposed method, the estimation errors $||\tilde\psi_t||_2$ and $||\tilde x_t||_2$ exponentially converge to zero. 

The results are compared with \cite{DEY20238708} using a PE and a non-PE input with the parameters $k_0=100$, $k_{\min}=0.1$ and $R=1$ for the recursive least square law in \cite{DEY20238708}. In Figs. \ref{fig:3} and \ref{fig:4}, we compare the 2-norm of the parameter estimation errors and of the state estimation errors, respectively, achieved using the proposed method and \cite{DEY20238708}; the applied inputs\footnote{The excitation component in the applied input is of the form $0.2u_{IE}[\sin(2t)+ u_{exc}\{\sin(3t)+\sin(5t)+\sin(7t)+\sin(11t)+\sin(13t)+\sin(17t)+\sin(23t)+\sin(29t)+\sin(31t)+\sin(37t)+\sin(41t)\};\;\sin(59t)+u_{exc}\{\sin(157t)+\sin(163t)+\sin(167t)+\sin(173t)+\sin(179t)+\sin(61t)+\sin(67t)+\sin(71t)+\sin(73t)+\sin(79t)+\sin(83t)\}]$, where (a) for the proposed method $u_{IE}=e^{0.001}$, $u_{exc}=1$, (b) for \cite{DEY20238708} with PE, $u_{IE}=1$, $u_{exc}=1$, and (c) for the non-PE case of \cite{DEY20238708}, $u_{IE}=1$, $u_{exc}=0$.} showing the excitation are plotted in Fig. \ref{fig:5}. We observe that the estimation errors converge to zero with the PE signal whereas the absence of sufficient excitation in the non-PE case leads to bounded non-zero estimation errors. On the other hand, the proposed method does not require consistent excitation (compare Fig. \ref{fig:5} (a) with (b) and (c)); the IE condition is satisfied at $t_{SIE}=12$, and it leads to exponential convergence of the estimation errors.

 \begin{figure}[ht!]
      \vspace{0.23cm}\centering
     \framebox{\parbox{3in}{\includegraphics[scale=0.418]{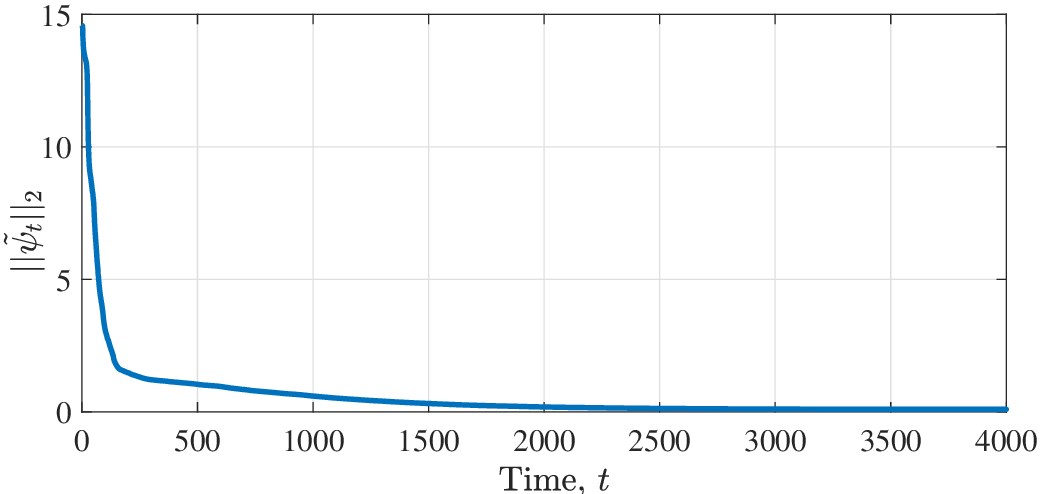}}}
      \caption{2-norm of the estimation error $\tilde \psi_t$ with the proposed method.}       \label{fig:1}
   \end{figure}
   \begin{figure}[h!]
      \vspace{0.23cm}\centering
     \framebox{\parbox{3in}{\includegraphics[scale=0.418]{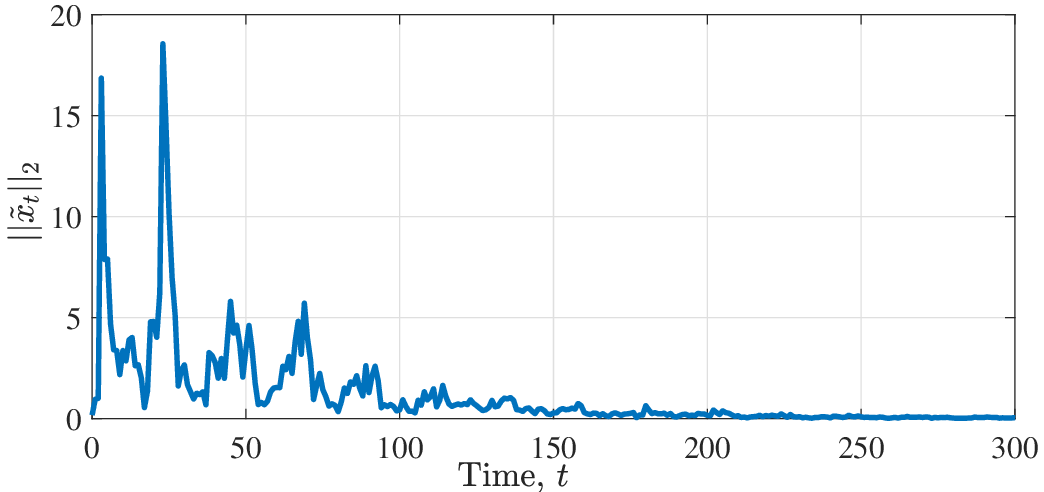}}}
      \caption{2-norm of the state estimation error with the proposed method.}       \label{fig:2}
   \end{figure}
   \begin{figure}[th!]
      \vspace{0.23cm}\centering
     \framebox{\parbox{3in}{\includegraphics[scale=0.418]{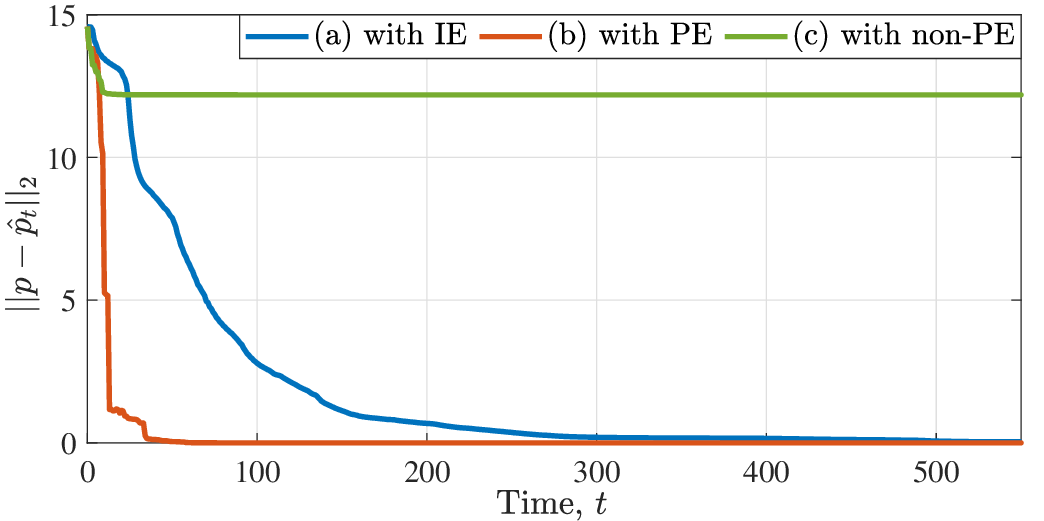}}}
      \caption{Comparison of the 2-norm of the parameter estimation errors using (a) the proposed method with an IE input, (b) \cite{DEY20238708} with a PE input, and (c) \cite{DEY20238708} with a non-PE input.}       \label{fig:3}
   \end{figure}
   \begin{figure}[th!]
      \vspace{0.23cm}\centering
     \framebox{\parbox{3in}{\includegraphics[scale=0.418]{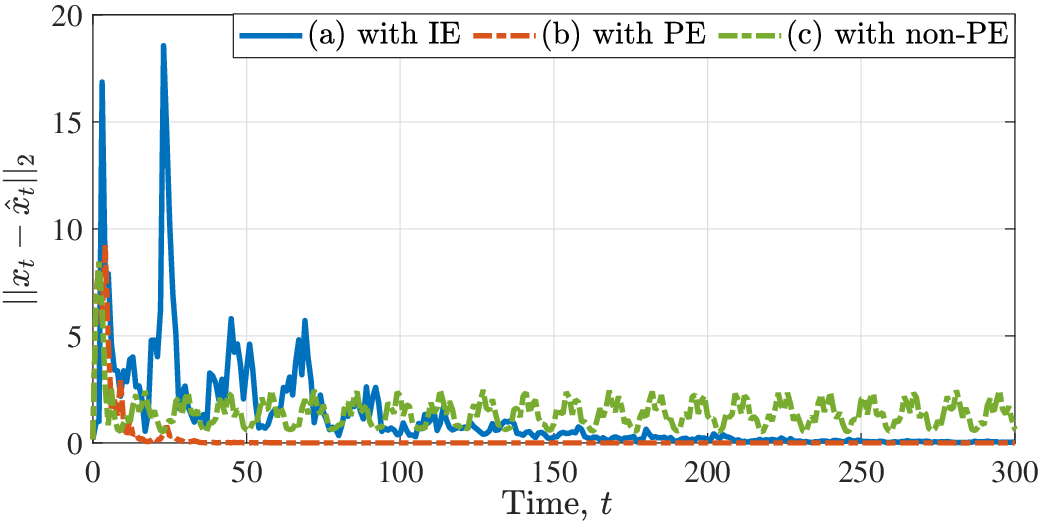}}}
      \caption{Comparison of the 2-norm of the state estimation errors using (a) the proposed method with an IE input, (b) \cite{DEY20238708} with a PE input, and (c) \cite{DEY20238708} with a non-PE input.}       \label{fig:4}
   \end{figure}
   \begin{figure}[th!]
      \vspace{0.23cm}\centering
     \framebox{\parbox{3in}{\includegraphics[scale=0.418]{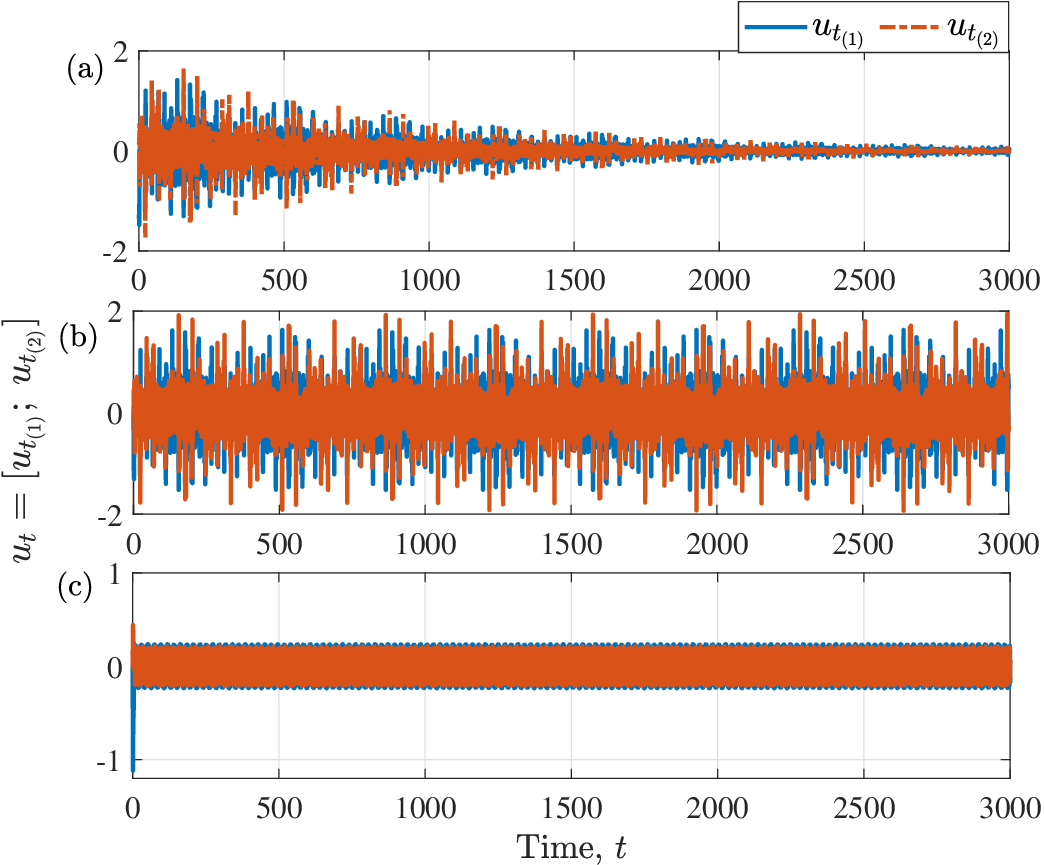}}}
      \caption{Input signal used for (a) the proposed method, (b) \cite{DEY20238708} with PE, and (c) \cite{DEY20238708} with non-PE.}       \label{fig:5}
   \end{figure}
\section{Conclusion}

In this work, we propose an IE-based adaptive observer for discrete-time LTI systems. The observer simultaneously estimates both the system states and unknown parameters using only input–output data, without requiring the restrictive PE condition. By incorporating a two-layer filtering structure and a normalized gradient descent–based update law, the design ensures stable and exponentially convergent estimation. The introduction of modified regressors with independent rows enables the extraction of richer information, resulting in faster convergence. Unlike traditional approaches that rely on the stringent PE condition, which demands continuous excitation and infinite control effort, the proposed method relies on a finite-time IE signal, making the convergence condition online verifiable and practically feasible. Theoretical analysis guarantees exponential convergence of the estimation errors under the IE condition, and simulation results validate the efficacy of the proposed observer in accurately estimating both states and parameters.

Future work will focus on extending the proposed framework to nonlinear and time-varying systems, investigating robustness to measurement noise, and exploring adaptive output-feedback controller designs based on the proposed observer.

\bibliographystyle{IEEEtran}
\footnotesize{\bibliography{IEEEabrv,reference}}

\end{document}